\newtheorem{theo}{Theorem}
\newtheorem{exmp}{Example}[section]
\newcommand{\com}[1]{{\color{red}\textbf{Comment}:#1}}
\newcommand{\com}[1]{}
\newcommand{\comment}[1]{}
\title{\LARGE \bf
Group Formation through Game Theory and Agent-Based Modeling: Spatial Cohesion, Heterogeneity, and Resource Pooling
}
\author{Chenlan Wang$^{1}$
, Jimin Han$^{2}$
, Diana Jue-Rajasingh$^{2}$ 
\thanks{$^{1}$ University of Michigan, Ann Arbor, USA. 
$^{2}$ Rice University, Houston, USA.
}
}
\begin{document}

\maketitle
\thispagestyle{empty}
\pagestyle{empty}

\begin{abstract}
 This paper develops a game-theoretic model and an agent-based model to study group formation driven by resource pooling, spatial cohesion, and heterogeneity. We focus on cross-sector partnerships (CSPs) involving public, private, and nonprofit organizations, each contributing distinct resources. Group formation occurs as agents strategically optimize their choices in response to others within a competitive setting. We prove the existence of stable group equilibria and simulate formation dynamics under varying spatial and resource conditions. The results show that limited individual resources lead to groups that form mainly among nearby actors, while abundant resources allow groups to move across larger distances. Increased resource heterogeneity and spatial proximity promote the formation of larger and more diverse groups. These findings reveal key trade-offs shaping group size and composition, guiding strategies for effective cross-sector collaborations and multi-agent systems.
\end{abstract}

\begin{keywords}
Optimization, game theory, agent-based model, group formation, cross-sector partnership, resource pooling, heterogeneity, spatial cohesion 
\end{keywords}
%
\section{Introduction}
Group formation is a common phenomenon in both natural and engineered systems, often driven by the pursuit of collective benefits such as resource pooling, enhanced accuracy, and greater energy efficiency. For example, firefighters and robots collaborate to enable faster and more effective disaster response \cite{kumar2004robot, roldan2021survey}. Surgeons and surgical robots coordinate to improve precision during operations \cite{abbasi2024integration}. Concertgoers often use ridesharing to lower travel costs. Birds migrate in flocks to improve foraging efficiency and reduce predation risk \cite{alerstam2011optimal}. Similarly, cross-sector partnerships (CSPs) unite government, business, and nonprofit organizations to address complex social problems through resource sharing, complementary expertise, and coordinated action \cite{quelin2017public, selsky2005cross}.

However, group formation is not without costs. These include maintenance, communication overhead, coordination delays, and travel constraints. For example, there may be a maximum operational range for a firefighter or operator to control a rescue robot or drone effectively. Communication—whether between humans or robots—often requires ongoing synchronization, increasing energy demands, and system complexity \cite{klavins2004communication}. In animal groups, cohesion requires continuous adjustment and alignment, which can be energetically expensive \cite{alerstam2011optimal}. In CSPs, reconciling various goals, values, and decision-making processes can lead to negotiation burdens, trust-building challenges, and delay in action \cite{ashraf2017animosity}. In human organizations more broadly, group decision-making can slow down responses due to conflicting preferences or hierarchical structures. In addition, larger groups may suffer from free-riding problems, reduced individual accountability, or an increased risk of resource depletion.

The formation of stable groups among strategic agents has been examined in various disciplines, including computer networks, economics, social science, public policy, political science, and biology
\cite{couzin2006behavioral,  neal2016compatibility, hollard2000existence, milchtaich2002stability}.
A wide range of models have been developed to explore this process, such as coalition formation games \cite{hagen2020two, hoefer2018dynamics, wang2024structural}, 
sequential games \cite{wang2024stackelberg, wang2023cooperation, wang2025decision},
clustering models \cite{van2013community}
and agent-based modeling approaches
\cite{collins2017agent, collins2018strategic}.

This work asks: What are the tradeoffs between spatial cohesion, heterogeneity, and resource pooling in shaping emerging stable group structures? In particular, how do these factors affect group size and composition? We frame this as an optimization problem in a strategic setting.

We analyze this group formation problem through a game-theoretic lens, where equilibrium reflects that each agent has reached a locally optimal decision. Our study builds upon the Spatial Group Formation Game introduced by \cite{wang2024structural}, which analyzed the trade-off between spatial cohesion and resource pooling among homogeneous individuals—i.e., agents of a single affiliation type. While that model offered valuable insights into how agents self-organize spatially to share resources, it did not account for heterogeneity among agents. 

In many real-world scenarios, especially in the formation of cross sector partnerships (CSPs), group members differ not only in the quantity of resources they possess but also in their organizational type, such as public, private, or nonprofit. Studying group formation in such heterogeneous settings is essential for understanding how diverse actors coordinate to achieve shared goals despite structural and strategic differences. This heterogeneity often takes the form of resource complementarity, a key driver of CSPs, where organizations from different sectors contribute distinct but interdependent assets that, when combined, enhance collective problem-solving capacity \cite{selsky2005cross, mahoney2009perspective}. The resource-based view (RBV) suggests that organizations form partnerships to access capabilities they lack, leveraging each other’s strengths to pursue common objectives \cite{barney1991firm}. For example, private firms may provide financial capital and technological expertise, while public institutions contribute legitimacy, regulatory support, and access to large scale infrastructure \cite{mahoney2009perspective}. These complementary resources can improve partnership outcomes by fostering innovation, increasing operational efficiency, and enhancing the overall impact on societal challenges \cite{selsky2005cross}. However, while resource diversity enables synergy, it also introduces complications such as coordination difficulties, institutional misalignment, and governance complexity. These challenges require strategic management to ensure partnership effectiveness \cite{ashraf2017animosity}.

The formation of CSP, in particular, can be viewed as a problem of group formation in strategic interactions, where organizations decide to join or form coalitions based on mutual benefit, while anticipating the choices of others. Stability depends on whether any member has an incentive to leave for a better option. Motivated by this, we extend the previous model to a one-shot spatial group formation game with heterogeneous agents, allowing us to explore how organizational diversity shapes coalition dynamics in CSPs. Beyond CSPs, our model can be broadly applied to other domains involving heterogeneous multi-agent interactions, such as human–robot teaming, joint disaster response, collaborative sensing, and resource-sharing in smart infrastructure systems.

We begin by introducing our game model as a simultaneous group formation game in Section \ref{game model}, followed by a proof of equilibrium existence in Section \ref{equlibrium}. In Section \ref{ABM}, we employ agent-based modeling to visualize the dynamics of the group formation process and examine the resulting group structures at equilibrium. Section \ref{sec:discussion} presents numerical results from the agent-based simulations, highlighting how spatial cohesion, heterogeneity, and resource pooling influence the size and composition of stable groups. Finally, we conclude the paper in Section \ref{sec:conclusion}.

\section{Model and Preliminaries}
\subsection{The Game-Theoretic Model}\label{game model}

Our model builds upon the Spatial Group Formation Game Model introduced by \cite{wang2024structural}, which examines the tradeoff between spatial cohesion and resource pooling among individuals of a single type. We extend the model by incorporating heterogeneous individuals with distinct types and resource allocations to account for heterogeneity. In our work, we use the terms ``individuals", ``agents", ``actors", and ``organizations" synonymously. Similarly, ``group" and ``cross-sector partnership" are used interchangeably.

In our model, there are $n \geq 2$ individuals or organizations,  each indexed by $i \in \mathcal{N}=\{1,2,...,n\}$. Each organization belongs to one of three categories or types: public sector, private sector, or non-profit sector, denoted as $t_i \in \{c_1, c_2, c_3\} $. More generally, the model allows for $k \geq2$ categories, so that $t_i \in \{c_1, c_2, \ldots, c_k\}$. In this paper, we focus on the context of cross sector partnerships, where $k=3$, corresponding to the public, private, and nonprofit sectors. Each organization is characterized by a physical location $\mathbf{x_i} \in \mathbb{R}^2$, and a resource vector $r_i = (r_i^{c_1}, r_i^{c_2}, r_i^{c_3})$.The term ``resource" is used in a broad sense to represent inherent attributes such as capabilities, skills, or sector-specific characteristics, enabling differentiation among sector types. The allocation of resources is type-specific and follows the constraints:
\begin{itemize}
    \item if $t_i =$ public, then $r_i = (r_i^{c_1},0,0)$ with $r_i^{c_1}>0$;
    \item if $t_i =$ private, then $r_i = (0,r_i^{c_2},0)$ with $r_i^{c_2}>0$;
    \item if $t_i =$ non-profit, then $r_i = (0,0,r_i^{c_3})$ with $r_i^{c_3}>0$.
\end{itemize}
This formulation ensures that each organization possesses resources exclusively associated with its designated category. Although resources may not be strictly exclusive across different sectors in practice, we impose this assumption as a modeling simplification. 

We study how cross-sector partnerships are formed as a one-shot game. Each individual organization chooses the other organizations with whom it wants to be in the same CSP. We apply the same definitions for preferences and preference profiles from the game model by \cite{wang2024structural}: organization $i$'s strategy/action/preference space is $\mathcal{A}_i = \{ A \subseteq \mathcal{N}|i \in A\}$; sector $i$'s action by $a_i \in \mathcal{A}_i$; and a joint action profile is denoted by $\mathbf{a} = (a_1, a_2, \ldots, a_n)$.

Next, we assume that each organization derives equal benefits from its cross-sector partnerships, meaning that the utilities of agents within the same group are identical (i.e., $u_{i:i\in G} = U_{G}$). Although the distribution of group benefits may vary based on individual contributions, such as individual resources, this assumption serves as a simplification. Additionally, we disregard impacts such as competition among different cross-sector partnerships.

The utility function for each group is defined as:
\begin{equation}\label{eq:utility}
    U_G(\mathbf{t}, \mathbf{r}, \mathbf{x}) = f(R_G,D_G)
\end{equation}
where $\mathbf{t}=\{t_i: i\in G\}$, $\mathbf{r}=\{r_i: i\in G\}$, and $\mathbf{x}=\{\mathbf{x_i}: i\in G\}$. The function $f(R_G, D_G)$ is monotonically increasing with respect to $R_G$ and monotonically decreasing with respect to $D_G$. For instance, $f(R_G, D_G) = \frac{R_G}{e^{D_G}}$. In this context, $R_G$ and $D_G$ represent the aggregated resources and spatial coverage for group G, respectively. We call $R_G$ group resource, which captures the aggregated resources based on individual resources and their diversity in terms of types. Specially, the form of group resource is chosen as the following: 
\begin{equation}
    R_G = (R_G^{c_1} + R_G^{c_2} + R_G^{c_3})h(R_G^{c_1}, R_G^{c_2}, R_G^{c_3})
\end{equation}
Where $R_G^{c_1}$ represents the total resources in the first category (i.e. the resource of the public sectors in the CSP), defined as the sum of individual resources in the first category: $R_G^{c_1} = \sum r_i^{c_1}$. Similarly, $R_G^{pri}$ denotes the total resources in the second category (i.e. the resource of the private sectors in the CSP), calculated as $R_G^{c_2} = \sum r_i^{c_2}$, and $R_G^{c_3}$ signifies the total resources of the third category (i.e. the resource of the non-profit sectors in the CSP), given by $R_G^{c_3} = \sum r_i^{c_3}$. Therefore, $R_G^{c_1} + R_G^{c_2} + R_G^{c_3}$ represents the total resources of all individuals in the group.  The function $h$, referred to as the boosting factor due to heterogeneity, quantifies the enhancement in collective resources resulting from the presence of multiple resource categories within a group. In other words, it captures diversity based on the resources available in different sectors.
\begin{equation}
    h(R_G^{c_1}, R_G^{c_2}, R_G^{c_3}) = 1 + \frac{R_G^{c_1}R_G^{c_2} + R_G^{c_1}R_G^{c_3} + R_G^{c_2} R_G^{c_3} }{R_G^{c_1}+ R_G^{c_2}+R_G^{c_3}}
\end{equation}
Consequently, $R_G$ emphasizes the importance of heterogeneity in resource pooling. By integrating various resource types, groups can achieve collective outcomes that exceed the simple sum of individual resources. This synergy arises from the complementary nature of diverse resources, leading to enhanced efficiency and effectiveness.

\begin{exmp}\label{eg:RG}
Consider three cross-sector partnerships (CSPs) that possess the same total resources, defined as $R_G^{total} = R_G^{c_1}+ R_G^{c_2}+R_G^{c_3}$. For clarity, we introduce the collective resource vector $\mathbf{R_G} = (R_G^{c_1}, R_G^{c_2}, R_G^{c_3})$. The resource distributions for each CSP are as follows:
\begin{itemize}
    \item CSP1: $\mathbf{R_G} = (6,0,0)$, where all resources come from the public sector.
    \item CSP2: $\mathbf{R_G} =(4,2,0)$, with resources from both public and private sectors.
    \item CSP3: $\mathbf{R_G} = (2,2,2)$, where resources are distributed across public, private, and nonprofit sectors.
\end{itemize}
Although all three cross-sector partnerships (CSPs) have the same total resources ($R_G^{total} = 6$), we expect the effective collective resources to differ due to variations in sectoral diversity. Specifically, we anticipate the following relationship: $R_{CSP1} < R_{CSP2} < R_{CSP3}$. This is because CSP3 exhibits the highest diversity, involving all three sectors (public, private, and nonprofit), followed by CSP2 (public and private), while CSP1 lacks any diversity, relying solely on public sector resources. To quantify this effect, we calculate a diversity-enhanced factor $h$ for each CSP:
\begin{itemize}
    \item CSP1: $h_{CSP1} = 1 + 0/6 = 1$
    \item CSP2:  $h_{CSP2} = 1 + 4*2/6 = 7/3$
    \item CSP3:  $h_{CSP3} = 1 + (2*2+2*2+2*2)/6 = 3$
\end{itemize}
Using these values, the effective collective resources for each CSP are calculated as follows: $R_{CSP1} = 6$, $R_{CSP2} = 14$, $R_{CSP3} = 18$. These results confirm that greater sectoral diversity leads to higher effective collective resources.
\end{exmp}

$D_G$ is defined as the total pairwise distance between all sectors within the cross-sector partnership:
\begin{equation}
    D_G = \sum_{i,j\in G} d_{i,j} = \sum_{i,j\in G} ||\mathbf{x_i} - \mathbf{x_j}||
\end{equation}
where $d_{i,j}$ represents the Euclidean distance between sectors i and j. It is an effective and widely used measure of spatial cohesion for the group, as it quantifies the overall spread or dispersion of sectors within the partnership.

\subsection{Equilibrium and Existence}\label{equlibrium}
Our study aims to address the following key questions:
\begin{itemize}
\item How do agents, such as individual organizations from different sectors, form cross-sector partnerships (CSPs) while accounting for the interplay among resource pooling, heterogeneity, and spatial cohesion?
\item What dynamics emerge when agents continuously update their preferences in response to their environment or to the actions of others?
\end{itemize}

To explore these questions, we first examine whether stable groups, such as CSPs, can form under these strategic conditions. This involves analyzing the existence of an equilibrium in the one shot group formation game.

We adopt the concept of Individually Stable Equilibrium (ISE) as the solution concept. A partition of the set of individuals or organizations into one or more groups is considered individually stable if no agent can improve its utility by moving to another group that would accept its membership. A more formal definition is provided in \cite{wang2024structural}.

\begin{theo} \label{theo:best-response converge}
	There always exist an Individually Stable Equilibrium in the group formation game.
\end{theo}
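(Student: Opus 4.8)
The plan is to prove existence by exhibiting a strict ordinal potential for the individually-stable improvement dynamics and then invoking finiteness of the partition space. First I would work directly at the level of partitions of $\mathcal{N}$, since the ISE concept is phrased over partitions: an IS-improving move is an agent $i$ leaving its current group $G$ to join another group $G'$ (possibly empty, i.e.\ becoming a singleton) such that the deviator strictly gains, $U_{G'\cup\{i\}} > U_G$, and the receiving group accepts, $U_{G'\cup\{i\}} \ge U_{G'}$. Because every member of a group shares the common utility $U_G$, the whole state is captured by the vector of per-agent utilities, which is what I would make the potential act on.

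Define $\Phi(P)$ to be the list of all $n$ per-agent utilities under partition $P$, sorted in decreasing order and compared lexicographically. The core claim is that every IS-improving move strictly increases $\Phi$. To see this, write $W := U_{G'\cup\{i\}}$ and $E := U_{G\setminus\{i\}}$ for the two post-move group utilities; the move replaces $|G|$ copies of $U_G$ and $|G'|$ copies of $U_{G'}$ by $(|G'|+1)$ copies of $W$ and $(|G|-1)$ copies of $E$, leaving all other agents untouched. Since $W > U_G$ and $W \ge U_{G'}$, for every threshold strictly above $\max(W,E)$ the number of agents with utility at least $\tau$ is unchanged, so I would locate the largest threshold at which this count first differs. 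If $E \le W$, that threshold is $W$, where the count rises by at least one; if $E > W$ (which forces $|G|\ge 2$, since a singleton mover leaves no remainder) the threshold is $E$, where the count rises by $|G|-1 \ge 1$. In either case the decisive top count strictly increases, which is exactly the statement that the sorted-descending vector increases lexicographically.

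With this strict potential in hand the conclusion is immediate. There are only finitely many partitions of $\mathcal{N}$, so $\Phi$ takes finitely many lexicographic values and attains a maximum at some partition $P^{*}$. If $P^{*}$ admitted an IS-improving move, that move would produce a partition with strictly larger $\Phi$, contradicting maximality; hence $P^{*}$ is an ISE. Equivalently, any sequence of IS-improving (best-response) updates strictly ascends in $\Phi$ and therefore terminates after finitely many steps at an ISE, matching the ``best-response converge'' reading of the statement.

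The step I expect to be the main obstacle is justifying the potential claim against the fact that removing a member can \emph{lower} the utility of the group left behind: deleting agent $i$ decreases $D_G$ (which helps) but also shrinks the aggregate resources and can destroy the cross-type products inside the heterogeneity factor $h$ (which hurts), so $E = U_{G\setminus\{i\}}$ may fall well below $U_G$. This non-monotonicity is precisely why a naive social-welfare potential such as $\sum_i u_i$ or $\sum_G U_G$ fails—the collateral loss in the vacated group can outweigh the gains. The lexicographic construction is what absorbs this difficulty: any drop in $E$ only reshuffles lower-priority entries of the sorted vector, while the newly created utility $W$ (or the increased $E$ in the second case) dominates at the decisive top threshold, so the comparison is already settled before the collateral loss is reached.
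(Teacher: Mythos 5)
Your proof is correct and takes essentially the same approach as the paper: both use the sorted-descending utility vector $\phi = [u_{[1]}, u_{[2]}, \ldots, u_{[n]}]$ as a lexicographic potential and conclude existence from the finiteness of the set of partitions. The only difference is one of completeness, not of method: you verify in detail (via the threshold-counting argument) that every individually-stable improving move strictly increases the potential, a step the paper delegates to the cited prior work rather than proving in-line.
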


\begin{proof}
   Algorithm 1, as introduced in \cite{wang2024structural}, is a general (asynchronous) improvement-update algorithm that can be directly applied to our game. Due to space limitations, we do not reproduce the algorithm here, but refer readers to \cite{wang2024structural} for full details. The central element of this algorithm is a potential function defined as the vector $\phi = [u_{[1]}, u_{[2]}, \ldots, u_{[n]}]$, which represents the individual utilities sorted in descending order.

Each update step in the algorithm ensures that this potential function does not decrease. Because there are only finitely many possible partitions of the player set, the number of distinct utility vectors is also finite. Therefore, the algorithm must eventually converge to a partition in which no player has an incentive to deviate unilaterally.

This convergence implies the existence of an individually stable equilibrium (ISE). That is, for any given set of organizations, there exists a partition into one or more groups such that no individual has an incentive to leave their group for another.
\end{proof}

To gain a deeper understanding of how cross-sector partnerships are formed dynamically, we extend our one-shot game into an agent-based model. This allows us to simulate the interactions between individual organizations over time, capturing the evolving preferences and decisions of each agent. By doing so, we can better analyze the dynamic processes underlying the formation of partnerships and the factors influencing their stability and evolution. The details of this approach are provided in the next section.

\subsection{The Agent-based Model}\label{ABM}
In the previous section, we showed that ISE always exists. Algorithm 1, described in \cite{wang2024structural}, can be directly applied to set up the agent-based model for the dynamics and equilibrium analysis of our one-shot game. 
\subsubsection{Initialization}
\begin{itemize} 
    \item Assign $m$ agents/organizations to each category or type, i.e., $m$ private organizations, $m$ public organizations, and $m$ nonprofit organizations. In total, there are $n=3m$ organizations. Note that selecting an equal number of organizations in each category is for illustrative and simplification purposes; the approach can be extended to arbitrary combinations.
    \item Assign each agent a 2D location by randomly selecting coordinates within the range $[0, x_{max}] \times [0, y_{max}]$.
    \item Assign each organization a category-specific resource by uniformly randomly selecting a value from the range $[1, r_{max}]$, where $r_{max} \geq 1$.
    \item Initialize the system with each group as a singleton, meaning each agent starts independently. In the context of cross sector partnerships (CSPs), a valid CSP must include organizations from at least two different sectors. Therefore, no true CSPs exist at the beginning of the process.
    \item Initialize the remaining agent set as the set of all agents/organizations. This set is used to track organizations that may have opportunities for improvement.
\end{itemize}
\subsubsection{Update} 
\begin{enumerate}
    \item Randomly select one agent/organization from the remaining agent set to update its choice by improving its utility. There are only two possible choices for the agent:
    \begin{itemize}
        \item Joining an existing group, provided the group accepts it (i.e., joining requires group permission and will not harm the utility of the group).
        \item Setting up its own singleton group.
    \end{itemize}
    \item Update the remaining agent set:
    \begin{itemize}
        \item If neither of the two choices improves the utility of the agent, no update will be made by the agent. In this case, remove the agent from the remaining agent set.
        \item If an update occurs from the agent, reset the remaining agent set to include all agents.
    \end{itemize}
    \item Repeat the first step.
\end{enumerate}
\subsubsection{Convergence}
If no agent can improve their utility by joining a group that accepts them, the groups are considered stable, referred to as an Individually Stable Equilibrium. This stability is guaranteed by the proof for Algorithm 1.

\section{Numerical Results and Discussion}\label{sec:discussion}
We use Python 3.8.19 to simulate the agent-based model described above. Initially, we visualize the dynamics of organization updates throughout the process until stable groups are formed. The primary focus of our results is to analyze the patterns of stable groups under different initialization settings.

In the settings, we select $m = 5$ organizations per category, resulting in a total of 15 individuals. For simplicity, we set $x_{max} = y_{max}$ and select the list $\{1, 10, 20, 40, 60, 80, 100\}$ for $x_{max}$. For the upper bound of individual resources, we select the list $\{1, 20, 40, 60, 80, 100\}$ for $r_{max}$. For each setting, we run 1000 simulations with random initialization based on the chosen parameters. In other words, the results presented reflect the average patterns observed across these 1000 games. For each game, it reaches stable group formation in fewer than 200 iterations.

\begin{figure}[ht]
    \centering
    \includegraphics[width=0.48\textwidth]{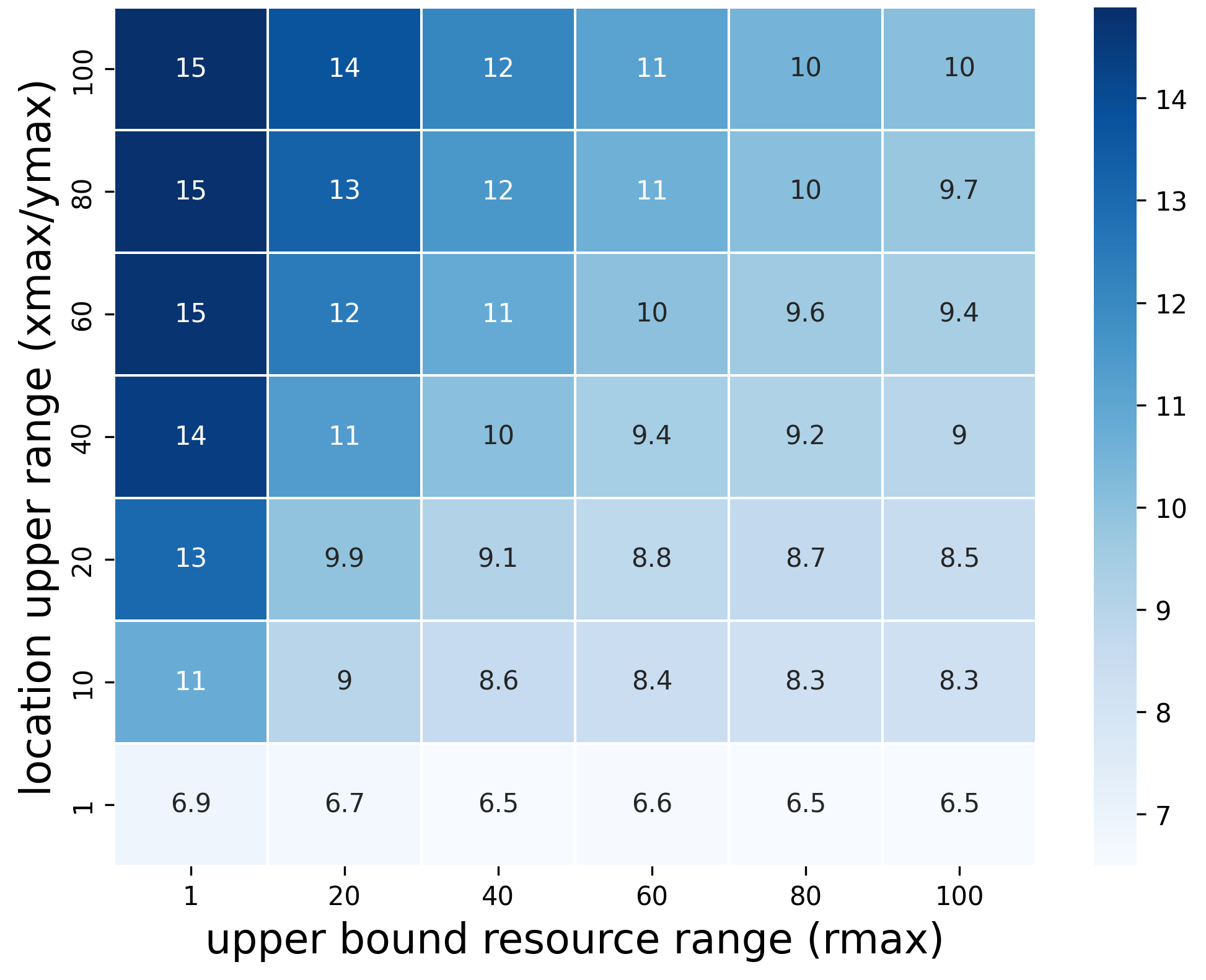}
    \caption{Number of groups across 1000 simulations/games for each setting (Grid).}
    \label{fig:sub1}
\end{figure}

\begin{figure}[ht]
    \centering
    \includegraphics[width=0.48\textwidth]{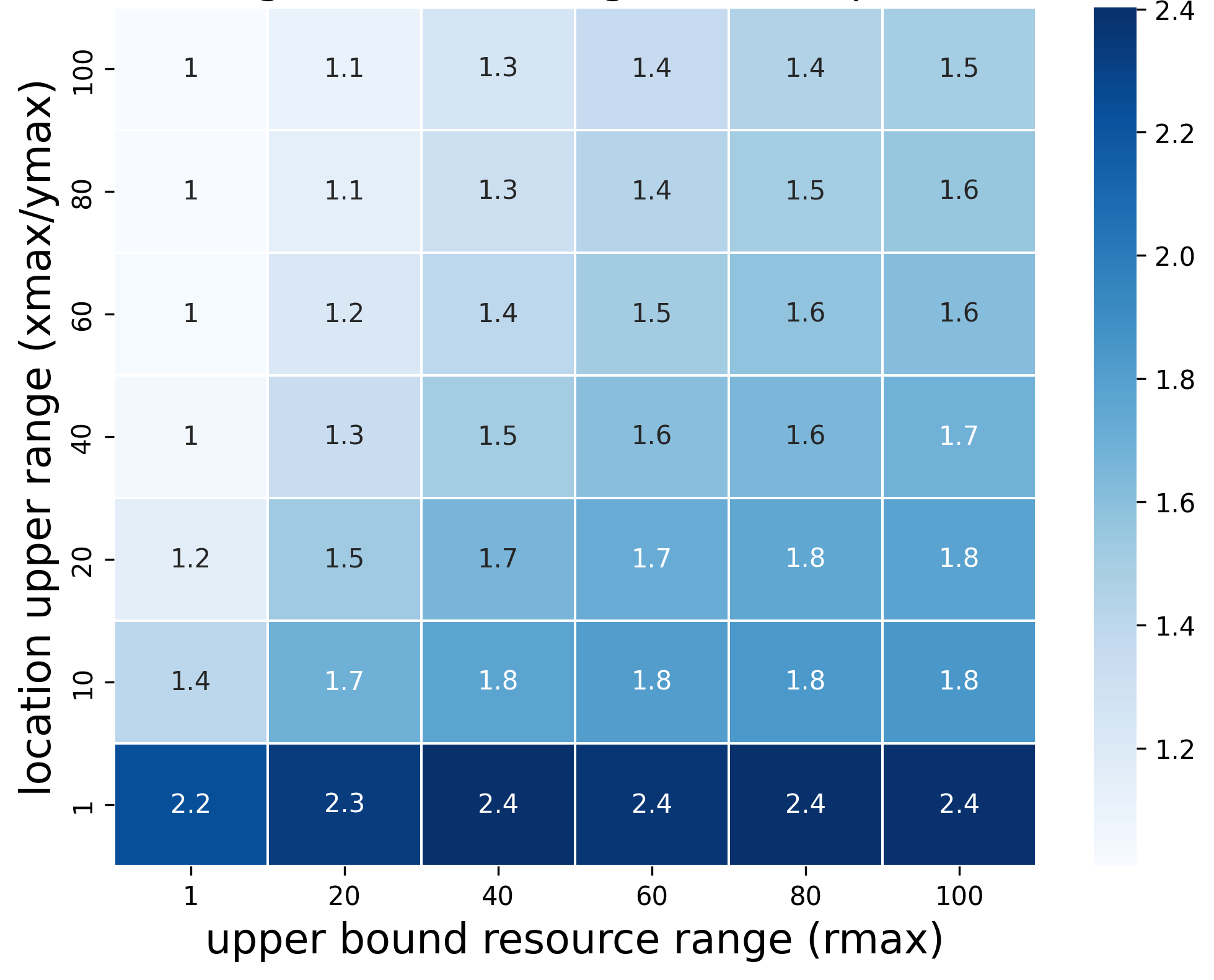}
    \caption{The average number of organizations in the stable groups of each game is calculated, and then this average is computed across 1000 games.}
    \label{fig:sub2}
\end{figure}

\begin{figure}[ht]
    \centering
    \includegraphics[width=0.48\textwidth]{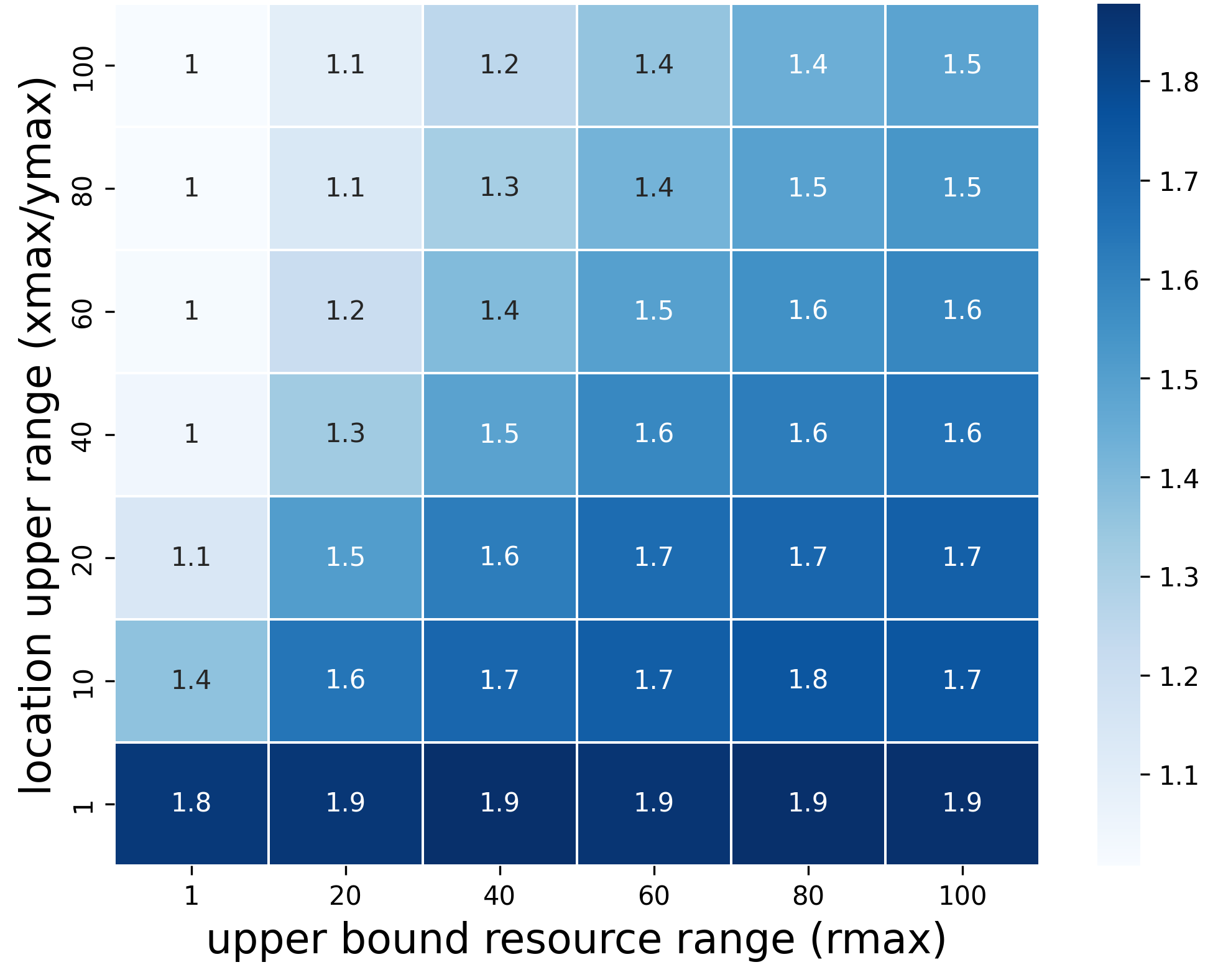}
    \caption{The average number of sectors in the stable groups of each game is calculated, and then this average is computed across 1000 games.}
    \label{fig:sub3}
\end{figure}

Some key results we draw from the simulations:
\begin{enumerate}
    \item As illustrated by the rows in Figure~\ref{fig:sub1} and Figure~\ref{fig:sub2}, an increase in the resource range leads to the formation of larger groups, with more organizations included in each group. This results in a decrease in the total number of groups formed across all organizations. This can be explained as follows: A larger resource range increases the likelihood of organizations' resource dependence because some organizations have many sector-specific resources, and some organizations have few sector-specific resources. This inter-organizational resource dependence, in turn, promotes the formation of larger groups (i.e., groups with more organizations and therefore more interorganizational dependencies) as one moves from the leftmost column across the figure to the rightmost column. 
    \item As illustrated by the columns in Figure~\ref{fig:sub1} and Figure~\ref{fig:sub2}, an increase in the location range leads to the formation of smaller groups, with fewer organizations or single organization included in each group (i.e., when the average number of organizations in a group is 1 in Figure~\ref{fig:sub2}; this is presently categorized as a singleton group). Note that when there is only one sector, no CSP formation occurs. In other words, CSPs may not form at all under certain conditions. This results in an increase in the total number of groups (including singletons) formed across all organizations. This can be explained as follows: a larger location range increases the distances among organizations, which in turn discourages the formation of groups and CSPs due to the higher spatial cost. An extreme case occurs when all organizations have the same resource quantity (i.e., the first column of Figure~\ref{fig:sub2}), in which no CSPs are formed (i.e., every organization chooses to be a singleton group) when the location range is sufficiently large (i.e., the top-right corner of Figure~\ref{fig:sub3}). At the other extreme, when all organizations have highly variable resource amounts (i.e., the last column of Figure~\ref{fig:sub2}), CSPs are always formed, even when the location range is large. This is because resource dependencies increase the need for inter-organizational partners, regardless of distance. 
    \item There is also a consistent pattern between Figures~\ref{fig:sub2} and~\ref{fig:sub3}: as the number of organizations in groups increases, the number of sectors within those groups also increases. Moreover, larger (i.e., more organizations) and more diverse (i.e., more sectors) groups are likely to form when organizations are located close to each other. When the location range is lowest (i.e., the bottom row of Figures~\ref{fig:sub2} and~\ref{fig:sub3}), group size and diversity are relatively consistent across resource ranges.
\end{enumerate}

Our agent-based model also enables us to observe how group formation unfolds over time under different conditions. The figures below illustrate a representative example of this process under specific settings. For clarity, we present six key iterations, beginning with individuals in singleton groups and ending with the convergence to a stable group structure. A full visualization of the dynamic process will be made available in the final version of the paper.

\begin{exmp}\label{eg:dynamic}
In this toy example, we choose 3 organizations in each of the three categories, which makes it 9 organizations in total, i.e. $\mathcal{N}=\{1,2,...,9\}$ and $\mathbf{t} = \{c_1, c_1, c_1, c_2, c_2, c_2, c_3, c_3, c_3\}$. Their individual resources in each of the three categories are 
$\mathbf{r}_{c_1} = \{10, 1, 9, 0,0,0,0,0,0\}$, 
$\mathbf{r}_{c_2} = \{0,0,0,5,2,20,0,0\}$, 
and $\mathbf{r}_{c_3} = \{0,0,0,0,0,0, 10, 8, 4\}$. In other words, the first individual's resource is $r_1 = (\mathbf{r}_{c_1,1}, \mathbf{r}_{c_2,1}, \mathbf{r}_{c_3,1}) = (10, 0, 0)$. The location of the individuals is $[\mathbf{x}, \mathbf{y} ] = [[1,2], [2,1], [0,0], [3,1], [1,1], [2,0], [2,3], [1,0], [3,2]]$.

After running the agent-based model simulation under this setting, the system converges in 21 iterations. Below, we highlight 6 key iterations where significant changes or new group formations occur. As shown in the snapshots, the formation process begins with isolated individuals and gradually leads to the emergence of stable cross-sector partnerships. Groups form by balancing spatial proximity and resource complementarity, with categorically (or sectorally, in the case of CSPs) heterogeneous coalitions becoming increasingly prominent as the system approaches convergence. Specifically, the process begins with 9 individuals in singleton groups. A 2-agent group forms first, followed by the formation of a 3-agent group. Subsequently, two separate 2-agent groups emerge. Some groups dissolve during the process, and the system eventually converges to a stable configuration consisting of one 3-agent group and three 2-agent groups.

\begin{figure}[ht]
    \centering
    \subfloat[initialization]{%
        \includegraphics[width=0.48\linewidth]{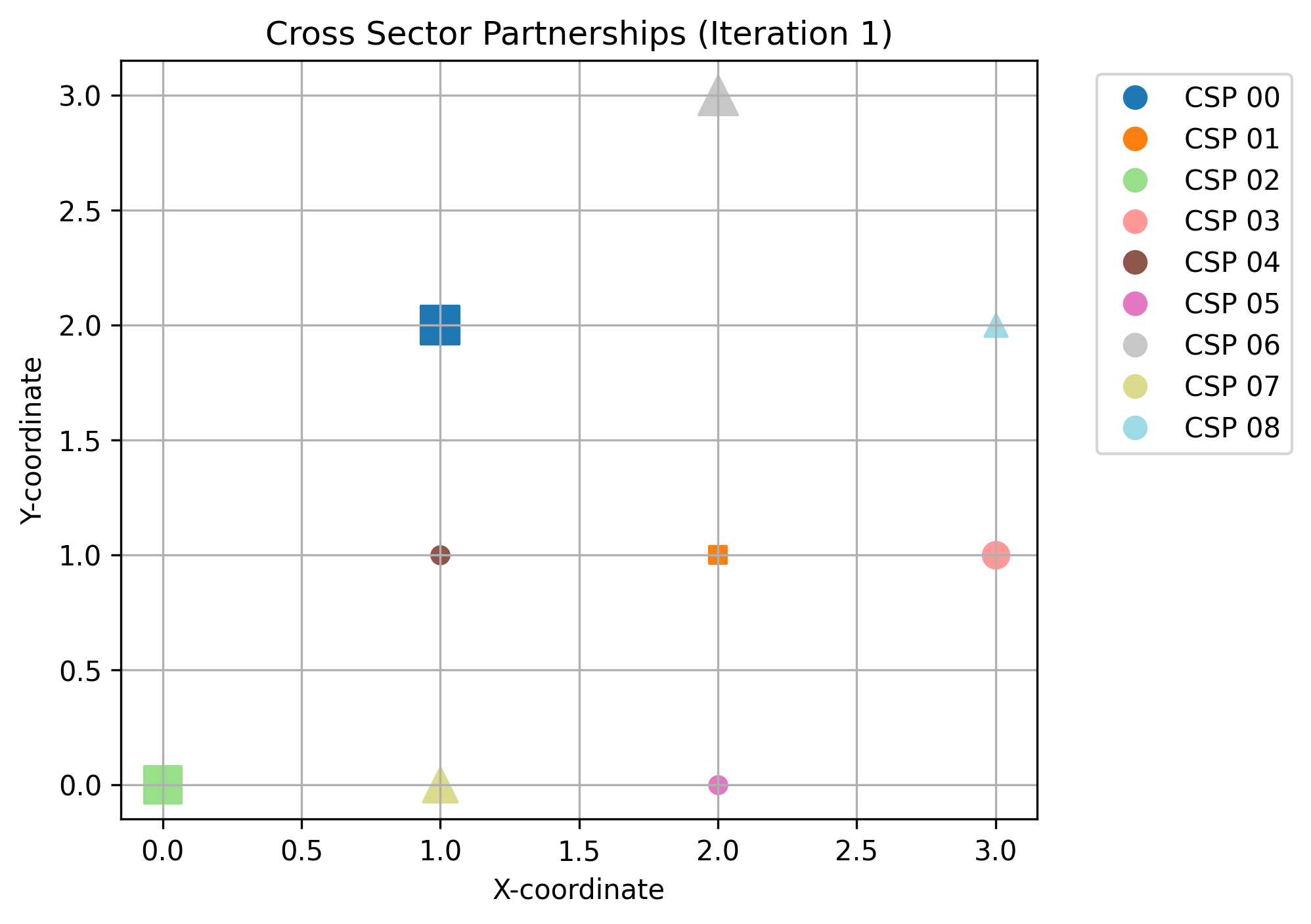}
        \label{fig:iter1}
    }
    \hfill
    \subfloat[iteration 1]{%
        \includegraphics[width=0.48\linewidth]{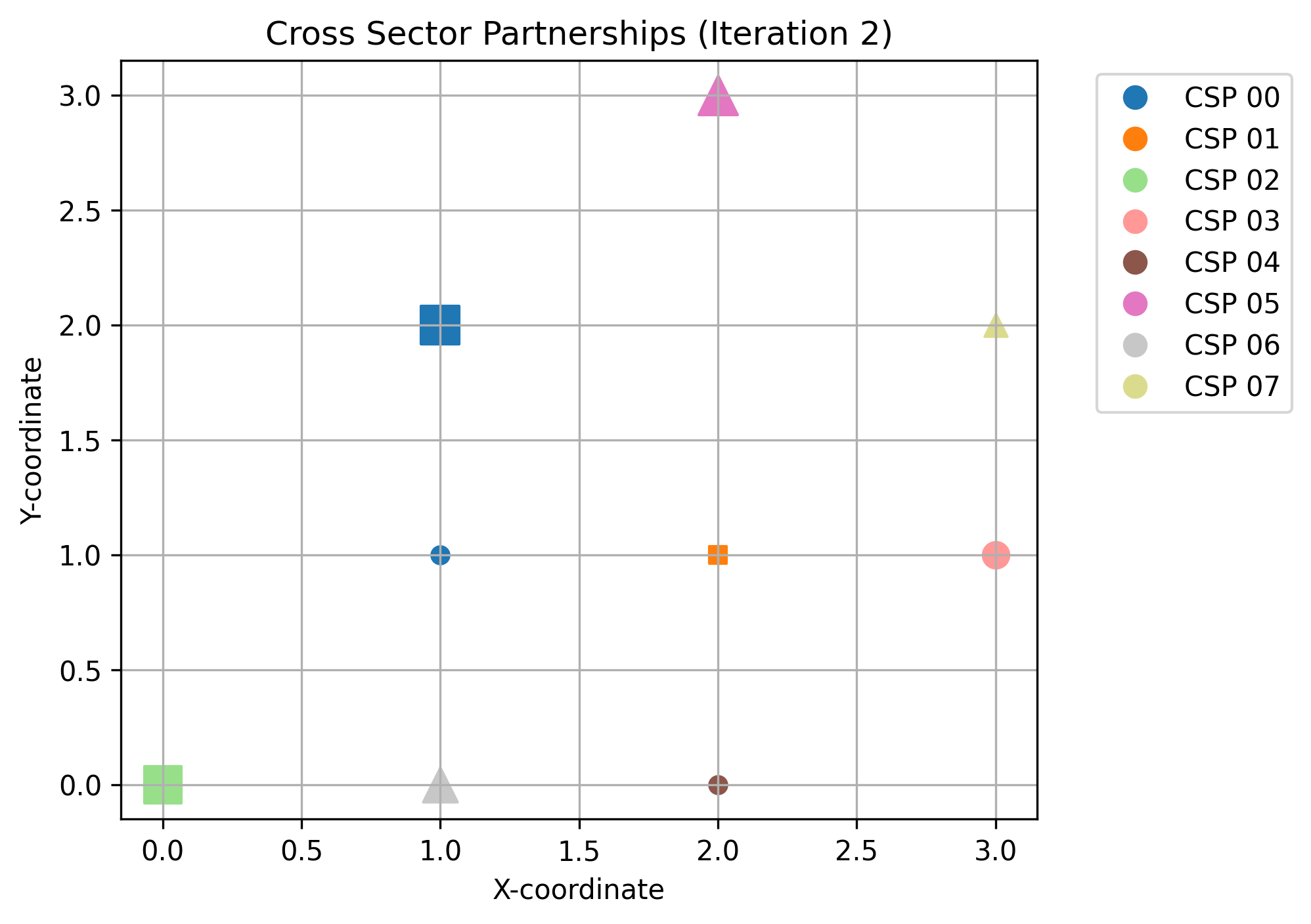}
        \label{fig:iter2}
    }
    \hfill
    \subfloat[iteration 2]{%
        \includegraphics[width=0.48\linewidth]{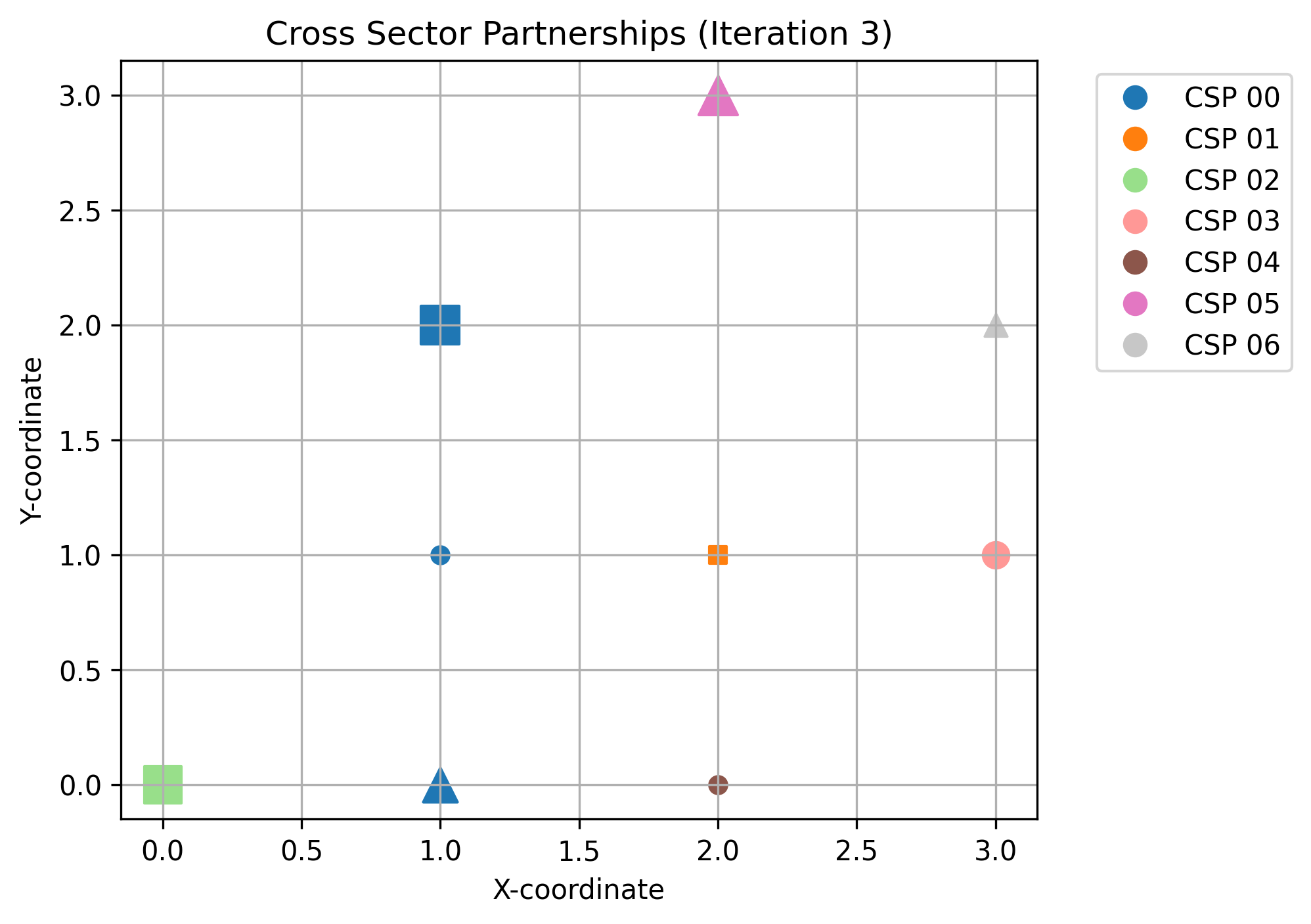}
        \label{fig:iter3}
    }
    \hfill
    \subfloat[iteration 8]{%
        \includegraphics[width=0.48\linewidth]{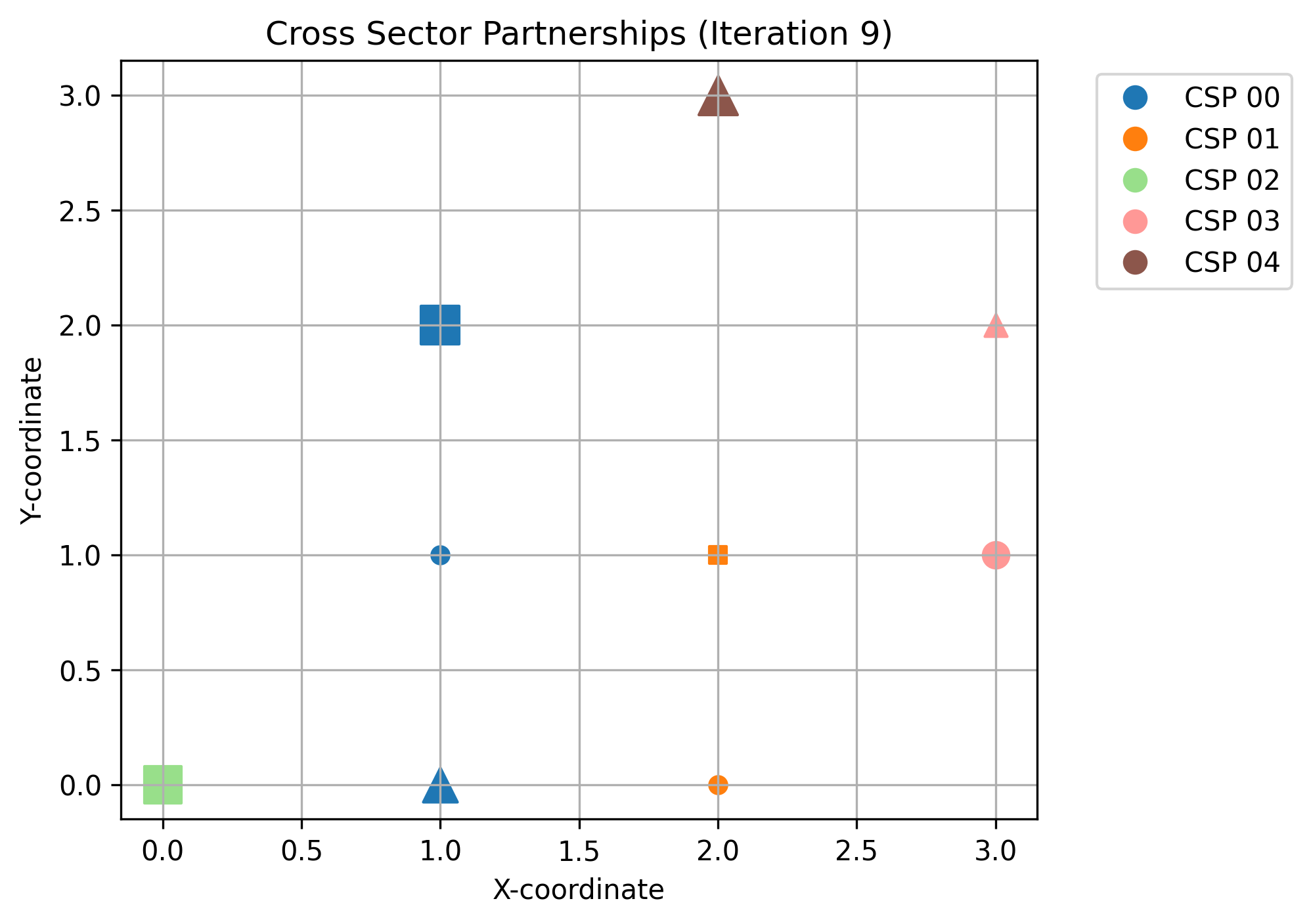}
        \label{fig:iter4}
    }
    \hfill
    \subfloat[iteration 11]{%
        \includegraphics[width=0.48\linewidth]{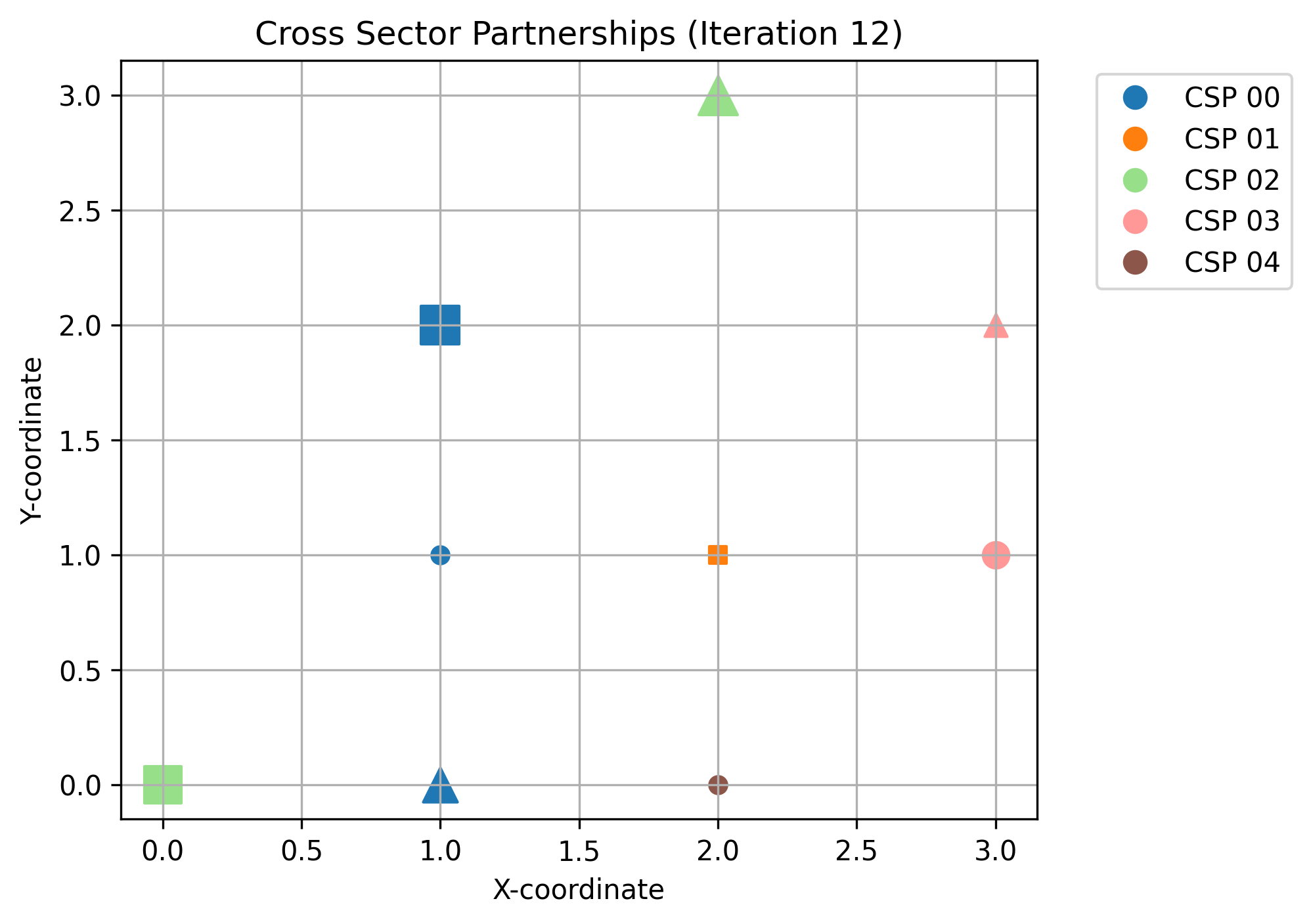}
        \label{fig:iter5}
    }
    \hfill
    \subfloat[iteration 21]{%
        \includegraphics[width=0.48\linewidth]{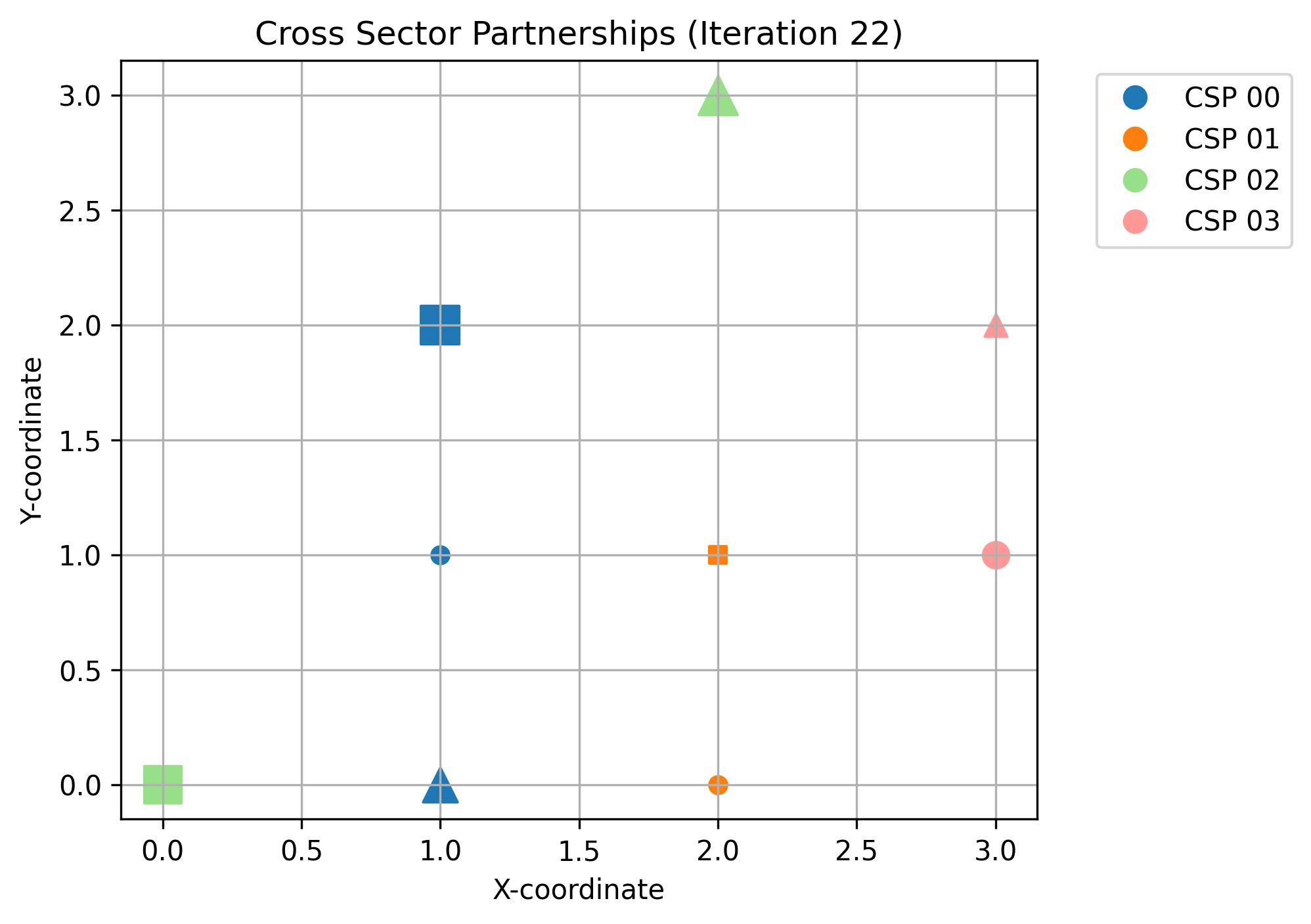}
        \label{fig:iter6}
    }
    \caption{Selected snapshots from the agent-based model simulation, illustrating six key iterations out of the 21 total iterations for convergence at an equilibrium. These iterations highlight major changes in group composition and the emergence of stable cross-sector partnerships. The shapes represent the categories: square for $c_1$, circle for $c_2$, and triangle for $c_3$. The size of each shape reflects the value of individual resources.}
    \label{fig:overall}
\end{figure}
\end{exmp}
\section{Conclusion}\label{sec:conclusion} 
In this paper, we developed a game-theoretic model, complemented by an agent-based model, to examine how resource dependence and spatial cohesion interact in the formation of groups such as cross-sector partnerships (CSPs). Essentially, this is an optimization problem where individuals try to maximize their utility by forming groups based on their rational anticipation of others’ strategies. Our analysis identifies resource pooling, resource heterogeneity, and spatial cohesion as fundamental drivers of group formation. Specifically, when individual resources are limited, groups form only among spatially proximate actors, as the cost of coordination outweighs the benefits of resource complementarity. In contrast, when individual resources are abundant, groups form even across greater distances, as the need for critical resources overrides geographic constraints. Furthermore, resource-heterogeneous groups tend to emerge more readily when organizations are located near one another, allowing them to capitalize on pooling and complementarity.

The agent-based model also allows us to trace the dynamic process of group formation across varying initial conditions. Our results show that increasing the range of available resources leads to larger, more inclusive groups but reduces the total number of distinct groups, as organizations rely more on shared resources to overcome spatial separation. In contrast, expanding the geographic range results in smaller or fragmented groups, or no group formation at all, due to heightened spatial costs. As group size increases, so does sectoral heterogeneity, reflecting the interplay between geography and resource availability in shaping group composition.

These findings underscore that successful group formation is driven not only by internal incentives like resource complementarity, but also by external constraints such as spatial proximity and organizational heterogeneity. For CSPs and similar collaborations to form and thrive, organizations must be both willing and able to engage, decisions shaped by their resource needs and spatial context. Understanding these mechanisms can help guide strategies to promote effective cross-sector collaborations, especially in settings where both access to resources and geographic constraints are critical to addressing complex societal or engineering challenges.
\nocite{*}
\bibliographystyle{unsrt}
\bibliography{myreference}


\end{document}